\newtheorem{theorem}{Theorem}
\newtheorem{lemma}{Lemma}
\newenvironment{proof}{{\bf Proof.}}{\hfill\rule{2mm}{2mm}}
\newtheorem{remarka}{Remark}
\newlength{\cellwid}
\def\emline#1#2#3#4#5#6{%
       \put(#1,#2){\special{em:moveto}}%
       \put(#4,#5){\special{em:lineto}}}
\def\newpic#1{}
\title{\bf An approximation algorithm for the total cover problem}
\author{
{\bf Pooya Hatami }
\\[1mm]
{\small\it Department of Mathematical Sciences} \\
Sharif University of Technology \\
{\small e-mail: p\_hatami@ce.sharif.edu}}
\date{}
\begin{document}
\maketitle

\begin{abstract}
We introduce a $2$-approximation algorithm for the minimum total
covering number problem.
\end{abstract}
{\sc Keywords: Covering; Total cover; Approximation algorithm.}

\section{Introduction}        

A {\sf vertex cover} of an undirected graph $G=(V,E)$ is a subset
$S\subseteq V$  such that if $e=uv\in E$, then $\{ u,v \} \cap S
\neq \emptyset$. A set $D \subseteq V \cup E$ is called a {\sf
total cover} if every element of $(V \cup E) \setminus D$ is
adjacent or incident to an element in $D$.

The notion of total covering is first defined in \cite{ABLN}, and
then studied in many papers~\cite{ALWZ,EM,Meir,PS}. Many
variations of the covering problems including vertex covers,
total covers, dominating sets, {\it et cetera} have been studied
previously (see~\cite{HHS}).

The minimum total cover problem was first shown to be NP-hard in
general graphs by Majumdar~\cite{Majumdar}, where he also gives a
linear-time algorithm for trees. Hedetniemi {\it et al.}
\cite{HHLMM} showed that the problem is NP-hard for bipartite and
chordal graphs. Manlove \cite{Manlove} demonstrates NP-hardness for
planar bipartite graphs of maximum degree $4$.

Trivially for every graph a vertex cover together with all
isolated vertices constitute a total cover. It is well-known that
a maximal matching can be used to find a vertex cover of size at
most twice the minimum vertex cover: If $M$ is a maximal matching
of the graph $G$, the set $S$ of all $2|M|$ vertices involved in
$M$ constitute a vertex cover of $G$. Moreover a vertex cover of
$G$ has at least $|M|$ elements, because every vertex is involved
in at most one matching edge. Thus taking the vertices which are
involved in a maximal matching gives a $2$-approximation
algorithm for the minimum vertex cover problem.

It is widely believed that it is {\sf NP}-hard to approximate the
vertex cover problem to within any factor smaller than $2$, and
recently Khot and Regev~\cite{Khot} proved that the {\sf Unique
Games Conjecture} would imply this. So far, the best known lower
bound is a recent result of Dinur and Safra~\cite{DS} which shows
that it is {\sf NP}-hard to approximate this problem to within any
factor smaller than $10 \sqrt{5}-21 \approx 1.36067$.

The approximability of the problem of finding a minimum total cover
does not seem to have received explicit attention in the literature
previously. However given a graph $G=(V,E)$, the relationship
$\alpha_2(G)=\gamma(T(G))$ holds, where $\alpha_2(G)$ denotes the
minimum size of a total cover in $G$, $\gamma(G)$ denotes the
minimum size of a dominating set in $G$, and $T(G)$ denotes the
total graph of $G$ (this is the graph with vertex set $V \cup E$,
and two vertices are adjacent in $T(G)$ if and only if the
corresponding elements are adjacent or incident as vertices or edges
of $G$). It follows from the correspondence that the minimum total
cover problem is approximable within a factor of $1+\log n$, where
$n=|V|$ \cite{Johnson}. Also, if $\Delta(G)\leq k$, then
$\Delta(T(G))\leq 2k+1$. It follows that, in a graph of maximum
degree $k$, the problem of finding a minimum total cover is
approximable within a factor of $H_{2(k+1)}-\frac{1}{2}$ \cite{DF},
where $H_i=\sum_{j=1}^i \frac{1}{j}$ is the $i$th Harmonic number.

We introduce a simple and elementary algorithm which finds a total
cover of size at most twice the size of an optimal total covering.
Note that, for $k\geq 3$, $H_{2(k+1)}-\frac{1}{2}\ge 2$, implying
that the above derived results would be improved upon this
$2$-approximation algorithm.


\section{The approximation algorithm}

In this section we introduce an approximation algorithm for
computing the minimum total cover number of a graph.

After that straightforward $2$-approximation algorithm for the
minimum vertex cover problem, it is tempting to try the same
algorithm for the total cover problem. It is easy to see that if
we modify this algorithm to include all isolated vertices too, we
obtain an approximation algorithm for the total cover problem. The
following example shows that the algorithm is not a
$(4-\epsilon)$-approximation: Consider the graph illustrated in
Figure~\ref{approx} for even $n$. The maximum matching of this
graph is of size $n$ while the set $S$ which consists of $v$ and
all edges of the form $e=u_i u_{i+1}$ is a total cover of size
$\frac{n}{2}+1$ of the graph. Lemma~1 will immediately conclude
that the mentioned algorithm has factor $4$.

\begin{figure}[ht]

\begin{center}



\unitlength=.40mm \special{em:linewidth 0.4pt}
\linethickness{0.4pt}

\begin{picture}(20.50,75.00)(10,5)

\put(-50.00,20.00){\circle*{3.00}}\
\put(-50.00,50.00){\circle*{3.00}}\
\put(-30.00,20.00){\circle*{3.00}}\
\put(-30.00,50.00){\circle*{3.00}}\
\put(-10.00,20.00){\circle*{3.00}}\
\put(-10.00,50.00){\circle*{3.00}}\
\put(10.00,20.00){\circle*{3.00}}\
\put(10.00,50.00){\circle*{3.00}}\
\put(30.00,35.00){\circle*{1.00}}\
\put(35.00,35.00){\circle*{1.00}}\
\put(40.00,35.00){\circle*{1.00}}\
\put(60.00,20.00){\circle*{3.00}}\
\put(60.00,50.00){\circle*{3.00}}\
\put(80.00,20.00){\circle*{3.00}}\
\put(80.00,50.00){\circle*{3.00}}\
\put(35.00,80.00){\circle*{3.00}}\
\emline{-50.00}{20.00}{1}{-50.00}{50.00}{2}
\emline{-30.00}{20.00}{1}{-30.00}{50.00}{2}
\emline{-10.00}{20.00}{1}{-10.00}{50.00}{2}
\emline{10.00}{20.00}{1}{10.00}{50.00}{2}
\emline{60.00}{20.00}{1}{60.00}{50.00}{2}
\emline{80.00}{20.00}{1}{80.00}{50.00}{2}

\emline{35.00}{80.00}{1}{-50.00}{50.00}{2}
\emline{35.00}{80.00}{1}{-30.00}{50.00}{2}
\emline{35.00}{80.00}{1}{-10.00}{50.00}{2}
\emline{35.00}{80.00}{1}{10.00}{50.00}{2}
\emline{35.00}{80.00}{1}{60.00}{50.00}{2}
\emline{35.00}{80.00}{1}{80.00}{50.00}{2}
\emline{-50.00}{20.00}{1}{-30.00}{20.00}{2}
\emline{-10.00}{20.00}{1}{10.00}{20.00}{2}
\emline{60.00}{20.00}{1}{80.00}{20.00}{2}
\put(-50.00,15.00){\makebox(0,0)[cc]{$u_1$}}
\put(-30.00,15.00){\makebox(0,0)[cc]{$u_2$}}
\put(-10.00,15.00){\makebox(0,0)[cc]{$u_3$}}
\put(10.00,15.00){\makebox(0,0)[cc]{$u_4$}}
\put(60.00,15.00){\makebox(0,0)[cc]{$u_{n-1}$}}
\put(80.00,15.00){\makebox(0,0)[cc]{$u_n$}}
\put(33.00,85.00){\makebox(0,0)[cc]{$v$}}
\put(-55.00,45.00){\makebox(0,0)[cc]{$v_1$}}
\put(-35.00,45.00){\makebox(0,0)[cc]{$v_2$}}
\put(-15.00,45.00){\makebox(0,0)[cc]{$v_3$}}
\put(5.00,45.00){\makebox(0,0)[cc]{$v_4$}}
\put(55.00,45.00){\makebox(0,0)[cc]{$v_{n-1}$}}
\put(75.00,45.00){\makebox(0,0)[cc]{$v_n$}}

\end{picture}

\caption{\label{approx} A hard example for maximal matching
algorithm.}

\end{center}

\end{figure}
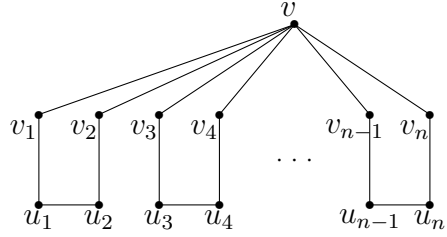

Next we introduce a $2$-approximation algorithm for this problem.
Consider a graph $G=(V,E)$ with $t$ isolated vertices. Let $M$ be
a maximum matching in $G$ of size $m$. Let $k$ be the number of
vertices that {\bf (i)} are not involved in $M$, and {\bf (ii)}
are adjacent to both endpoints of an edge in $M$; we call these
{\sf bad} vertices. Note that since $M$ is of maximum size, if a
bad vertex $w$ is adjacent to both endpoints of $e=uv$, then
neither $u$ nor $v$ is adjacent to any other vertex outside $M$.
We will find a total cover $S$ of size $m+k+t$ in $G$ through the
following algorithm:

\begin{enumerate}

\item Obviously every isolated vertex must be in $S$. Remove all
these vertices from $G$.

\item Select an edge $e=v_1 v_2$ from $M$ where both $v_1$ and
$v_2$ are adjacent to a bad vertex $v$. Add $v$ and $e$ to $S$ and
remove $v, v_1, v_2$ from $G$. Repeat this until all bad vertices
are removed.

\item After the first two steps the size of $S$ is $2k+t$. Now we
have a graph $G_1$ with a maximum matching $M_1$ of size $m-k$
without any bad vertices. Next we apply the following step:
\begin{itemize}

\item  Pick the edges $e=uv \in M_1$ in an arbitrary order, and
note that at most one of $u$ and $v$ is adjacent to some vertex in
$G_1\setminus M_1$:
\begin{itemize}
\item  If one of $u$ and $v$ is adjacent to some vertex $z \in
G_1\setminus M_1$ which is not covered by $S$, then add that
vertex to $S$.

\item  Otherwise add $e$ to $S$.

\end{itemize}

\end{itemize}
\end{enumerate}
It is clear that $S$ is of size $m+k+t$. We now show that $S$
covers every edge between vertices covered by $M_1$ (It is clear
that $S$ covers all other elements of $G$). Suppose that
$e_1=u_{1}v_{1}$ and $e_2=u_{2}v_{2}$ are matching edges in $M_1$
and the edge $e=u_{1}u_{2}$ is not covered. Then none of $e_1$,
$e_2$, $u_1$, $u_2$ are in $S$, and so both $v_1$, $v_2$ are in
$S$. Suppose that among $e_1$ and $e_2$, the edge $e_1$ was picked
first. So there is a vertex $w_1\in G_1 \setminus M_1$ adjacent to
$v_1$ and there is a vertex $w_2\in G_1 \setminus M_1$ which is
adjacent to $v_2$ but not to $v_1$. The path
$w_1$,$v_1$,$u_1$,$u_2$,$v_2$,$w_2$ is an augmenting path for
$M_1$ and this contradicts the fact that $M$ is a maximum matching
of $G$.

\begin{lemma}
The minimum total cover of $G$ has at least $\frac{m+k}{2}+t$
elements.
\end{lemma}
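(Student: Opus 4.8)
The plan is to lower-bound $|D|$ for an arbitrary total cover $D$ of $G$. First I would dispose of the isolated vertices: each of the $t$ isolated vertices lies in every total cover, and deleting them changes neither $m$ nor the collection of bad vertices, so it is enough to show that a graph with no isolated vertices has no total cover of size less than $\frac{m+k}{2}$, after which we add $t$ back.

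Assume henceforth that $G$ has no isolated vertices and let $D$ be a total cover, written as $D = D_V \cup D_E$ with $D_V \subseteq V$ and $D_E \subseteq E$. The next step is to pass from $D$ to an ordinary vertex cover at the cost of a factor $2$: let $D'$ consist of $D_V$ together with both endpoints of every edge in $D_E$, so that $|D'| \le |D_V| + 2|D_E|$. A short case analysis shows $D'$ is a vertex cover of $G$: an edge $e$ is covered by $D$ because $e \in D_E$, or because an endpoint of $e$ lies in $D_V$, or because some other edge sharing an endpoint with $e$ lies in $D_E$, and in each case an endpoint of $e$ ends up in $D'$. Since a vertex cover of $G$ has at least $m$ elements (as noted in the Introduction), this already gives $|D_V| + 2|D_E| \ge m$; the heart of the argument is to improve this to $|D_V| + 2|D_E| \ge m+k$, after which $|D| = |D_V| + |D_E| \ge \frac{1}{2}(|D_V| + 2|D_E|) \ge \frac{m+k}{2}$.

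To get the improved bound it suffices to show that every vertex cover of $G$ has at least $m+k$ vertices. For each bad vertex $w$ fix one matching edge $e_w = a_w b_w \in M$ both of whose endpoints are adjacent to $w$ (such an edge exists by the definition of a bad vertex). By the observation made above, if $w$ is adjacent to both endpoints of $e$ then neither endpoint of $e$ has a neighbor outside $M$ other than $w$; hence no matching edge can have two distinct bad vertices adjacent to both of its endpoints, so the edges $e_w$ are pairwise distinct. Consequently the triangles $\{a_w, b_w, w\}$ (over all bad $w$) are pairwise vertex-disjoint, and each is disjoint from every matching edge $e \neq e_w$ as well, since $w \notin V(M)$ while $a_w, b_w$ and the two endpoints of $e$ all lie in $V(M)$ and in pairwise distinct edges of $M$. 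We have thus exhibited $k$ vertex-disjoint triangles and, disjoint from all of them, the $m-k$ matching edges that are not of the form $e_w$. Any vertex cover of $G$ contains at least two vertices of each of these triangles and at least one endpoint of each of these edges, so it has at least $2k + (m-k) = m+k$ vertices. Combining with the previous paragraph gives $|D| \ge \frac{m+k}{2}$, and reinstating the isolated vertices yields the claimed bound $\frac{m+k}{2}+t$.

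I expect the only genuinely non-routine point to be the inequality that every vertex cover has at least $m+k$ vertices: the factor-$2$ passage from a total cover to a vertex cover is standard, and the bound $m$ is immediate from $M$, so the work lies in exploiting the structure forced by the maximality of $M$ — namely that each bad vertex closes a triangle on its associated matching edge, and that these triangles can be chosen vertex-disjoint from each other and from the remaining matching edges — to gain the extra additive $k$.
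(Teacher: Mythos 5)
Your proof is correct, but it takes a genuinely different route from the paper's. You factor the argument through two clean intermediate statements: (i) for a graph with no isolated vertices, replacing each edge of a total cover $D$ by its two endpoints yields a vertex cover of size at most $2|D|$, and (ii) every vertex cover of $G$ has at least $m+k$ vertices, because the $k$ bad triangles can be chosen pairwise vertex-disjoint and disjoint from the remaining $m-k$ matching edges, each triangle forcing two cover vertices and each leftover edge forcing one. The paper instead argues directly about the total cover $S$: it splits $S$ into a set $A$ of edges each covering two edges of $M$ and a remainder $B$, uses the bad triangles to force $|B|\ge k+t$, and counts how many edges of $M$ the pair $(A,B)$ can cover. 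Both proofs hinge on the same structural facts extracted from the maximality of $M$ (the augmenting-path observation that makes the bad triangles essentially disjoint), but your reduction to the vertex-cover bound $\tau(G)\ge m+k$ isolates the combinatorics more cleanly and avoids the somewhat delicate bookkeeping in the paper about which elements of $S$ cover which edges of $M$ (e.g., verifying that every bad triangle has an edge uncovered by $A$). The paper's direct charging argument, on the other hand, never leaves the total-cover setting and so does not need the explicit doubling step. Either way the constants come out the same, and your version would serve equally well as a proof of the lemma.
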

\begin{proof}
Call every triangle consisting of a  bad vertex $v$ and a matching
edge whose both endpoints are adjacent to $v$ a {\sf bad}
triangle. There are $k$ bad vertices, and no two bad vertices can
share a common matching edge, thus there exist at least $k$ bad
triangles.

Suppose that $S$ is a total cover in $G$. Let $A \subseteq S$ be
a \emph{maximal} set of \emph{edges} which covers $2|A|$ edges of
$M$, consisting of the edges each covering precisely two edges of
$M$. Let $B=S \setminus A$. Every bad triangle has at least one
edge which is not covered by $A$. Since no edge is incident to
two bad triangles with distinct vertices, no element (that is,
neither a vertex nor an edge) can cover two edges from two
disjoint bad triangles. Since the number of the bad triangles is
at least $k$, we have $|B|\geq k+t$, as $B$ has to cover the
isolated vertices too.

Since $B$ covers at most $|B|-t$ edges of $M$, $S=A \cup B$ covers
at most $2|A|+|B|-t$ edges of $M$. Thus $2|A|+|B|-t\geq m$. From
this inequality and $|B|\ge k+t$ we get $2(|A|+|B|)\ge m+k+2t$
which implies that $|S|=|A|+|B| \ge \frac{m+k}{2}+t$.
\end{proof}

\begin{theorem}
The minimum total cover problem admits a $2$-approximation
algorithm.
\end{theorem}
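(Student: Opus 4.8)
The plan is to combine the algorithm from Section~2 with Lemma~1. The algorithm explicitly constructs a total cover $S$ of $G$ of size exactly $m+k+t$, where $m$ is the size of a maximum matching of $G$, $k$ is the number of bad vertices, and $t$ is the number of isolated vertices. On the other hand, Lemma~1 shows that the minimum total cover $\alpha_2(G)$ satisfies $\alpha_2(G)\ge \frac{m+k}{2}+t$. Dividing, $\frac{|S|}{\alpha_2(G)}\le \frac{m+k+t}{\frac{m+k}{2}+t}$, and since $\frac{m+k+t}{\frac{m+k}{2}+t}\le 2$ (equivalently $m+k+t\le m+k+2t$, which holds as $t\ge 0$), we get $|S|\le 2\,\alpha_2(G)$.

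The first thing I would do in the write-up is verify that the output $S$ really is a total cover: this was argued informally right before Lemma~1 (isolated vertices are added in Step~1; bad vertices and their matching edges in Step~2 cover the bad triangles; in Step~3 each matching edge $uv$ contributes either an uncovered neighbour $z$ or the edge $uv$ itself, and the augmenting-path argument rules out an uncovered edge between two matched vertices). I would restate that every element of $V\cup E$ outside $S$ is adjacent or incident to an element of $S$, so $S$ is a legitimate total cover of size $m+k+t$.

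Next I would invoke Lemma~1 for the lower bound $\alpha_2(G)\ge\frac{m+k}{2}+t$, and then carry out the short arithmetic above to conclude the approximation ratio is at most $2$. Finally, I would note that the whole procedure runs in polynomial time: a maximum matching $M$ can be computed in polynomial time (e.g.\ via Edmonds' blossom algorithm), identifying isolated vertices and bad vertices is straightforward, and Steps~1--3 each perform a linear number of constant-time or linear-time updates. This establishes that the minimum total cover problem admits a polynomial-time $2$-approximation algorithm.

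The only real subtlety—already handled in the text preceding Lemma~1—is the correctness claim that no edge between two $M_1$-matched vertices is left uncovered in Step~3; the augmenting-path contradiction is the crux there, and I would make sure to present it cleanly, but no new obstacle arises at the level of the theorem itself, which is essentially just packaging the construction and Lemma~1 together with the inequality $\frac{m+k+t}{\frac{m+k}{2}+t}\le 2$.
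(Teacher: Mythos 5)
Your proposal is correct and is exactly the argument the paper intends: the paper's proof of the theorem is just ``Immediately from Lemma~1,'' relying on the algorithm's cover of size $m+k+t$ and the lower bound $\frac{m+k}{2}+t$, which is precisely the calculation you carry out (with the correctness and running-time points the paper leaves implicit made explicit).
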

\begin{proof}
Immediately from Lemma~1.
\end{proof}

Consider the graph illustrated in Figure~\ref{approx}. Our
algorithm finds a total cover of size $n+1$. The graph in
Figure~\ref{approx} has a total cover of size $\frac{n}{2}+1$.
The result of our algorithm is $2-o(1)$ times the size of the
minimum total cover of the graph. So our algorithm is not a
$(2-\epsilon)$-approximation, for any $\epsilon$.

\bibliographystyle{plain}
\bibliography{tcover-sub}

\end{document}